\newtheorem{theorem}{Theorem}[section]
\newtheorem{lemma}[theorem]{Lemma}
\newtheorem{corollary}[theorem]{Corollary}
\theoremstyle{definition}
\newtheorem{definition}[theorem]{Definition}
\newtheorem{example}[theorem]{Example}
\newtheorem{algorithm}[theorem]{Algorithm}
\theoremstyle{remark}
\newtheorem{remark}[theorem]{Remark}
\numberwithin{theorem}{section} \numberwithin{equation}{section}
\begin{document}
\begin{frontmatter}

\title{A complete algorithm to find exact minimal polynomial by approximations}
\tnotetext[label1]{This research was partially supported by the
National Key Basic Research Project of China (2004CB318003), the
Knowledge Innovation Program of the Chinese Academy of Sciences
(KJCX2-YW-S02), and the National Natural Science Foundation of China
(10771205).}
 \author[label,label1,label2]{Xiaolin Qin}\ead{qinxl@casit.ac.cn}

\author[label]{Yong Feng\corref{cor1}
}\ead{yongfeng@casit.ac.cn}
 \cortext[cor1]{Corresponding author.}
\author[label,label1,label2]{Jingwei Chen}
\author[label,label1]{Jingzhong Zhang}

\address[label]{Lab. of Computer Reasoning and
Trustworthy Comput., University of Electronic Science and Technology
of China, Chengdu 610054, China}

\address[label1]{Lab. for Automated Reasoning and
      Programming, Chengdu Institute of Computer Applications, CAS, Chengdu 610041, China}

\address[label2]{Graduate School of the Chinese Academy of Sciences, Beijing 100049, China}

\begin{abstract}
We present a complete algorithm for finding an exact minimal
polynomial from its approximate value by using an improved
parameterized integer relation construction method. Our result is
superior to the existence of error controlling on obtaining an exact
rational number from its approximation. The algorithm is applicable
for finding exact minimal polynomial of an algebraic number by its
approximate root. This also enables us to provide an efficient
method of converting the rational approximation representation to
the minimal polynomial representation, and devise a simple algorithm
to factor multivariate polynomials with rational coefficients.

Compared with the subsistent methods, our method combines advantage
of high efficiency in numerical computation, and exact, stable
results in symbolic computation. we also discuss some applications
to some transcendental numbers by approximations. Moreover, the
\emph{Digits} of our algorithm is far less than the LLL-lattice
basis reduction technique in theory. In this paper, we completely
implement how to obtain exact results by numerical approximate
computations.
\end{abstract}

\begin{keyword}
algebraic number, numerical approximate computation,
symbolic-numerical computation, integer relation algorithm, minimal
polynomial
\end{keyword}

\end{frontmatter}

\section{Introduction}
Symbolic computations are principally exact and stable. However,
they have the disadvantage of intermediate expression swell.
Numerical approximate computations can solve large and complex
problems fast, whereas only give approximate results. The growing
demand for speed, accuracy and reliability in mathematical computing
has accelerated the process of blurring the distinction between two
areas of research that were previously quite separate. Therefore,
algorithms that combine ideas from symbolic and numeric computations
have been of increasing interest in the recent two decades. Symbolic
computations are for sake of speed by intermediate use of
floating-point arithmetic. The work reported in
\cite{SSM1991,CG2000,HS2000,CG2001,CAW2002} studied the recovery of
approximate value from numerical intermediate results. A somewhat
related topic is algorithms that obtain the exact factorization of
an exact input polynomial by use of floating point arithmetic in a
practically efficient technique \cite{CG2006}. In the meantime,
symbolic methods are applied in the field of numerical computations
for ill-conditioned problems \cite{C1966,B1971,S1985}. The main goal
of hybrid symbolic-numeric computation is to extend the domain of
efficiently solvable problems. However, there is a gap between
approximate computations and exact results\cite{YZH1992}.

We consider the following question: Suppose we are given an
approximate root of an unknown polynomial with integral coefficients
and a bound on the degree and size of the coefficients of the
polynomial. Is it possible to infer the polynomial and its exact
root? The question was raised by Manuel Blum in Theoretical
Cryptography, and Jingzhong Zhang in Automated Reasoning,
respectively. Kannan \emph{et al} answered the question in
\cite{KLL1988}. However, their technique is based on the
Lenstra-Lenstra-Lovasz(LLL) lattice reduction algorithm, which is
quite unstable in numerical computations. In \cite{Just1989}, Just
\emph{et al} presented an algorithm for finding an integer relation
on $n$ real numbers using the LLL-lattice basis reduction technique,
which needed the high precision. The function
\emph{MinimalPolynomial} in \emph{maple}, which finds minimal
polynomial for an approximate root, was implemented using the same
technique.

In this paper, we present a new algorithm for finding exact minimal
polynomial and reconstructing the exact root by approximate value.
Our algorithm is based on the improved parameterized integer
relation construction algorithm PSLQ($\tau$), whose stability admits
an efficient implementation with lower run times on average than the
existing algorithms, and can be used to prove that relation bounds
obtained from computer runs using it is numerically accurate. The
other function \emph{identify} in \emph{maple} , which finds a
closed form for a decimal approximation of a number, was implemented
using the integer relation construction algorithm. However, the
choice of \emph{Digits} of approximate value is fairly arbitrary
\cite{BHM2003}. In contrast, we fully analyze numerical behavior of
an approximate to exact value and give how many \emph{Digits} of
approximate value, which can be obtained exact results. The work is
regard as a further research in \cite{ZF2007}. We solve the problem,
which can be described as follows:

Given an approximate value $\tilde{\alpha}$ at arbitrary accuracy of
an unknown algebraic number, and we also know the upper bound degree
$n$ of the algebraic number and an upper bound $N$ of
 its height on minimal polynomial in advance.
 The problem will be solved in two steps: First, we discuss how much the error $\varepsilon$ should be,
 so that we can reconstruct the
 algebraic  number $\alpha$ from its approximation
$\tilde{\alpha}$  when it holds that
$|\alpha-\tilde{\alpha}|<\varepsilon$. Of course, $\varepsilon$ is a
function in $n$ and $N$. Second, we give an algorithm to compute the
minimal polynomial of the algebraic number.

Based on our method, we are able to extend our results with the same
techniques to transcendental numbers of the form
sin$^{-1}$($\alpha$), log($\alpha$), etc., where $\alpha$ is
algebraic. we also propose a simple polynomial-time algorithm to
factor multivariate polynomials with rational coefficients, and
provide a natural, efficient technique to the minimal polynomial
representation. The basic idea is from \cite{KLL1988}. However, we
have greatly improved their efficiency.

The rest of this paper is organized as follows. Section 2
illustrates the improved parameterized integer relation construction
algorithm. Section 3 discusses how to reconstruct minimal polynomial
and some applications to some transcendental numbers by
approximations. Section 4 gives some experimental results. The final
section concludes this paper.

This paper is the final journal version of \cite{QFCZ2009}, which
contains essentially the entire contents of this paper.

\section{Preliminaries}
In this section, we first give some notations, and a brief
introduction on integer relation problems. Then an improved
parameterized integer relation construction algorithm is also
reviewed.

\subsection{Notations}
Throughout this paper, $\mathbf{Z}$ denotes the set of the integers,
$\mathbf{Q}$ the set of the rationals, $\mathbf{R}$ the set of the
reals, $\mathbb{O}(\mathbf{R}^{n})$ the corresponding system of
ordinary integers, $U(n-1,\mathbf{R})$ the group of unitary matrices
over $\mathbf{R}$, $GL(n,\mathbb{O}(\mathbf{R}))$ the group of
unimodular matrices with entries in the reals, $col_{i}$B the i-th
column of the matrix B. The ring of polynomials with integral
coefficients will be denoted $\mathbf{Z}[X]$. The $content$ of a
polynomial $p(X)$ in $\mathbf{Z}[X]$ is the greatest common divisor
of its coefficients. A polynomial in $\mathbf{Z}[X]$ is $primitive$
if its content is 1. A polynomial $p(X)$ has degree $d$ if
$p(X)=\sum_{i=0}^{d}p_{i}X^{i}$ with $p_{d}\neq 0$. We write
$deg(p)=d$. The $length$ $|p|$ of $p(X)=\sum_{i=0}^dp_{i}X^{i}$ is
the Euclidean length of the vector $(p_{0},p_{1},\cdots,p_{d})$; the
$height$ $|p|_{\infty}$ of $p(X)$ is the $L_{\infty}$-norm of the
vector$(p_{0},p_{1},\cdots,p_{d})$, so $|p|_{\infty}=\max_{0\leq i
\leq d}|p_{i}|$. An $algebraic$ $number$ is a root of a polynomial
with integral coefficients. The minimal polynomial of an algebraic
number $\alpha$ is the irreducible polynomial in $\mathbf{Z}[X]$
satisfied by $\alpha$. The minimal polynomial is unique up to units
in $\mathbf{Z}$. The $degree$ and $height$ of an algebraic number
are the degree and height of its minimal polynomial, respectively.

\subsection{Integer relation algorithm}
There exists an integer relation amongst the numbers $x_{1}, x_{2},
\cdots, x_{n}$ if there are integers $a_{1}, a_{2}, \cdots, a_{n}$,
not all zero, such that $\sum_{i=1}^n a_{i}x_{i}=0$. For the vector
$\textbf{x}=[x_{1}, x_{2},\cdots, x_{n}]^{T}$, the nonzero vector
$\textbf{a}=[a_{1}, a_{2},\cdots, a_{n}]\in \mathbf{Z}^{n}$ is an
integer relation for $\textbf{x}$ if $\textbf{a}\cdot\textbf{x}=0$.

In order to introduce the integer relation algorithm, we recall some
useful definitions and theorems\cite{FBA1999,BL2000}:
\begin{definition}($M_{\textbf{x}}$)
Assume $\textbf{x}=[x_{1}, x_{2},\cdots, x_{n}]^{T}\in
\mathbf{R}^{n}$ has norm $|\textbf{x}|$=1. Define
$\textbf{x}^{\bot}$ to be the set of all vectors in $\mathbf{R}^{n}$
orthogonal to $\textbf{x}$. Let $\mathbb{O}(\mathbf{R}^{n})\cap
\textbf{x}^{\bot}$be the discrete lattice of integral relations for
$\textbf{x}$. Define $M_{\textbf{x}}>0$ to be the smallest norm of
any relation for $\textbf{x}$ in this lattice.
\end{definition}
\begin{definition}\label{def:Hx}
($H_{\textbf{x}}$) Assume ${\textbf{x}}=[x_{1}, x_{2},\cdots,
x_{n}]^{T}\in \mathbf{R}^{n}$ has norm $|\textbf{x}|$=1.
Furthermore, suppose that no coordinate entry of $\textbf{x}$ is
zero, i.e., $x_{j}\neq 0$ for $1\leq j\leq n$(otherwise $\textbf{x}$
has an immediate and obvious integral relation). For $1\leq j\leq n$
define the partial sums
\begin{eqnarray*}
  s^{2}_{j}=\sum_{j\leq k \leq n}x_{k}^{2}.
\end{eqnarray*}
Given such a unit vector $\textbf{x}$, define the $n\times (n-1)$
lower trapezoidal matrix $H_{\textbf{x}}=(h_{i,j})$ by
$$ h_{i,j}=
\begin{cases}
         0             \ \ \ \  \ \ \ \ \ \ \ \ \ \ \ \ \ \ \ \ \ \ \mbox{if $1\leq i<j\leq n-1,$} \\
         s_{i+1}/s_{i}  \ \ \ \ \ \ \ \ \ \ \ \ \ \ \ \mbox{if $1\leq i=j \leq n-1,$} \\
        -x_{i}x_{j}/(s_{j}s_{j+1}) \ \ \ \ \mbox{if $1 \leq j<i\leq
        n.$} \\
        \end{cases}
$$

Note that $h_{i,j}$ is scale invariant.
\end{definition}
\begin{definition}\label{def:ModifHR}
(Modified Hermite Reduction) Let H be a lower trapezoidal matrix,
with $h_{i,j}=0$ if $j>i$ and $h_{j,j}\neq 0$. Set $D=I_{n}$, define
the matrix $D=(d_{i,j})\in GL(n,\mathbb{O}(\mathbf{R}))$ recursively
as follows: For i from 2 to n, and for j from i-1 to 1(step -1), set
$q=nint(h_{i,j}/h_{j,j})$; then for k from 1 to j replace $h_{i,k}$
by $h_{i,k}-qh_{j,k}$, and for k from 1 to $n$ replace $d_{i,k}$ by
$d_{i,k}-qd_{j,k}$, where the function $nint$ denotes a nearest
integer function, e.g., $nint(t)=\lfloor t+1/2 \rfloor$.
\end{definition}
\begin{theorem}\label{theo:the lower bound}
Let $\textbf{x}\neq 0\in \mathbf{R}^n$. Suppose that for any
relation $\textbf{m}$ of $\textbf{x}$ and for any matrix A $\in
GL(n,\mathbb{O}(\mathbf{R}))$ there exists a unitary matrix Q$\in$
U(n-1) such that $H = AH_{x}Q$ is lower trapezoidal and all of the
diagonal elements of H satisfy $h_{j,j}\neq 0$. Then
 \begin{eqnarray*}
  \frac{1}{\max_{1\leq j \leq n-1}
|h_{j,j}|}=\min_{1\leq j \leq n-1}\frac{1}{|h_{j,j}|}\leq
{|\textbf{m}|}.
\end{eqnarray*}
\end{theorem}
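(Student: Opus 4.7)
The equality $\frac{1}{\max_j |h_{j,j}|} = \min_j \frac{1}{|h_{j,j}|}$ is immediate, so the substance is the bound $\min_j \frac{1}{|h_{j,j}|} \leq |\mathbf{m}|$. The plan is to exploit the lower trapezoidal structure of $H$ together with the fact that $A\mathbf{m}$ is an integer vector, using a contradiction argument.

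First I would set up the algebraic bridge between a relation $\mathbf{m}$ of $\mathbf{x}$ and a vector in the domain of $H$. A direct calculation from Definition \ref{def:Hx} shows that $\mathbf{x}^T H_{\mathbf{x}} = 0$ and that the columns of $H_{\mathbf{x}}$ are orthonormal (one verifies $\sum_i h_{i,j}^2 = 1$ using $\sum_{i>j} x_i^2 = s_{j+1}^2$, and similarly for off-diagonal inner products). Hence the columns of $H_{\mathbf{x}}$ form an orthonormal basis of $\mathbf{x}^{\perp}$. Since $\mathbf{m}$ is a relation, $\mathbf{m}\in\mathbf{x}^{\perp}$, so $\mathbf{m} = H_{\mathbf{x}}\mathbf{z}$ for some $\mathbf{z}\in\mathbf{R}^{n-1}$ with $|\mathbf{z}| = |\mathbf{m}|$. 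Setting $\mathbf{w} = Q^T\mathbf{z}$ we get $A\mathbf{m} = A H_{\mathbf{x}} Q Q^T \mathbf{z} = H\mathbf{w}$, and unitarity of $Q$ gives $|\mathbf{w}| = |\mathbf{z}| = |\mathbf{m}|$. Moreover, since $A\in GL(n,\mathbb{O}(\mathbf{R}))$ is unimodular with integer entries and $\mathbf{m}\neq 0$ is an integer vector, $A\mathbf{m}$ is a nonzero integer vector.

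Now comes the main step. Suppose for contradiction that $|\mathbf{m}| < \min_j \frac{1}{|h_{j,j}|}$. Then for every $j$ we have $|w_j| \leq |\mathbf{w}| = |\mathbf{m}| < 1/|h_{j,j}|$, so $|h_{j,j} w_j| < 1$. I would then prove by induction on $j$ that $w_j = 0$. For $j=1$, the first entry reads $(A\mathbf{m})_1 = h_{1,1} w_1$ (since $H$ is lower trapezoidal), which is an integer of absolute value strictly less than $1$, hence zero; as $h_{1,1}\neq 0$, we get $w_1 = 0$. For the inductive step, assuming $w_1 = \cdots = w_{j-1} = 0$, the lower trapezoidal structure gives $(A\mathbf{m})_j = \sum_{k\leq j} h_{j,k} w_k = h_{j,j} w_j$, again an integer of absolute value less than $1$ and therefore zero, forcing $w_j = 0$. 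Hence $\mathbf{w} = 0$, contradicting $A\mathbf{m} \neq 0$.

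The only subtlety, and thus the main obstacle, is the verification that $H_{\mathbf{x}}$ has orthonormal columns spanning $\mathbf{x}^{\perp}$, which is what makes the passage $\mathbf{m}\mapsto\mathbf{z}\mapsto\mathbf{w}$ norm-preserving; everything else follows cleanly from the triangularity of $H$ and the integrality of $A\mathbf{m}$. One may assume $|\mathbf{x}|=1$ without loss of generality, since $h_{i,j}$ is scale invariant by the remark following Definition \ref{def:Hx} and the notion of integer relation is unchanged by scaling $\mathbf{x}$.
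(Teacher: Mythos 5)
Your proof is correct. Note that the paper does not actually prove this statement --- it simply cites Theorem~1 of the Ferguson--Bailey--Arno PSLQ analysis \cite{FBA1999} --- so what you have done is reconstruct the argument of that reference, and you have reconstructed it accurately: the orthonormality computations for the columns of $H_{\mathbf{x}}$ (using $\sum_{i>j}x_i^2 = s_{j+1}^2$), the identification of those columns as a basis of $\mathbf{x}^{\perp}$, the passage $A\mathbf{m} = H\mathbf{w}$ with $|\mathbf{w}|=|\mathbf{m}|$, and the integrality of $A\mathbf{m}$ are exactly the ingredients of the standard proof. The only stylistic difference is that the reference argues directly --- take $j$ to be the smallest index with $w_j\neq 0$, observe $(A\mathbf{m})_j = h_{j,j}w_j$ is a nonzero integer, and conclude $|\mathbf{m}|\geq |w_j|\geq 1/|h_{j,j}|$ --- whereas you run the contrapositive as an induction showing every $w_j$ vanishes; the two are logically interchangeable. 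One purely cosmetic caveat: the paper writes relations as row vectors with $\mathbf{a}\cdot\mathbf{x}=0$, so strictly you should either transpose $\mathbf{m}$ or write $\mathbf{m}A^{T}$, but this does not affect the argument.
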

\begin{proof} See Theorem 1 of \cite{FBA1999}.
\end{proof}
\begin{remark}
The inequality of Theorem \ref{theo:the lower bound} offers an
increasing lower bound on the size of any possible relation. Theorem
\ref{theo:the lower bound} can be used with any algorithm that
produces $GL(n,\mathbb{O}(\mathbf{R}))$ matrices. Any
$GL(n,\mathbb{O}(\mathbf{R}))$ matrix $A$ whatsoever can be put into
Theorem \ref{theo:the lower bound}.
\end{remark}
\begin{theorem}\label{theo:iterations}
Assume real numbers, $n\geq 2$, $\tau >1$, $\gamma>\sqrt{4/3}$, and
that $0\neq \textbf{x}\in \mathbf{R}^{n}$ has
$\mathbb{O}(\mathbf{R})$ integer relations. Let $M_{\textbf{x}}$ be
the least norm of relations for $\textbf{x}$. Then PSLQ($\tau$) will
find some integer relation for $\textbf{x}$ in no more than
 \begin{eqnarray*}
 {n\choose 2 }\frac{log(\gamma^{n-1}M_{\textbf{x}})}{log\tau}
 \end{eqnarray*}
 iterations.
\end{theorem}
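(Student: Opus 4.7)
The plan is to marry a potential-function argument with the diagonal lower bound from Theorem \ref{theo:the lower bound}. First, I would track the running lower-trapezoidal matrix $H^{(k)} = (h^{(k)}_{i,j})$ produced by PSLQ($\tau$) after $k$ iterations and introduce the scale-invariant monovariant
\[
\Phi_k \;=\; \max_{1 \le j \le n-1} \gamma^{j}\,|h^{(k)}_{j,j}|.
\]
The parameterized exchange rule of PSLQ($\tau$) selects the pivot index $r$ that maximizes $\gamma^{r}|h^{(k)}_{r,r}|$, and a direct calculation on the swap together with the modified Hermite reduction of Definition \ref{def:ModifHR} shows that this weighted diagonal is contracted by at least the factor $\tau$. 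Since only one diagonal changes decisively per iteration, a pigeonhole argument over the $n-1$ slots shows that $\Phi_k$ itself must drop by a factor of $\tau$ in at most $\binom{n}{2}$ iterations, i.e., $\Phi_{k+\binom{n}{2}} \le \Phi_k/\tau$.

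Second, I would pin down the initial and terminal values of $\Phi$. The normalization $|\mathbf{x}|=1$ makes $s_1=1$ in Definition \ref{def:Hx}, and the partial sums $s_j$ are non-increasing; hence $|h^{(0)}_{j,j}| = s_{j+1}/s_j \in [0,1]$, which yields $\Phi_0 \le \gamma^{n-1}$. As long as the algorithm has not returned a relation, Theorem \ref{theo:the lower bound} applied to $H^{(k)}$ together with any relation $\mathbf{m}$ of $\mathbf{x}$ gives $\max_j |h^{(k)}_{j,j}| \ge 1/|\mathbf{m}| \ge 1/M_{\mathbf{x}}$, so $\Phi_k \ge 1/M_{\mathbf{x}}$.

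Third, assembling these ingredients yields
\[
\tau^{k/\binom{n}{2}} \;\le\; \frac{\Phi_0}{\Phi_k} \;\le\; \gamma^{n-1}\,M_{\mathbf{x}},
\]
and solving for $k$ produces exactly the claimed bound $k \le \binom{n}{2}\log(\gamma^{n-1}M_{\mathbf{x}})/\log\tau$. Termination is detected when some $h^{(k)}_{j,j}$ vanishes (equivalently, when $\Phi_k$ is driven below the threshold $1/M_{\mathbf{x}}$), at which point the corresponding column of the accumulated unimodular transform $A \in GL(n,\mathbb{O}(\mathbf{R}))$ delivers an integer relation for $\mathbf{x}$, as in the standard PSLQ termination analysis.

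The main obstacle will be the per-iteration contraction together with the $\binom{n}{2}$ amortization. One must check that the weighted swap followed by modified Hermite reduction genuinely contracts the chosen diagonal by the full factor $\tau$, without the integer rounding step or the subsequent off-diagonal updates spoiling the decrease, and then show that the running maximum of $\gamma^{j}|h^{(k)}_{j,j}|$ cannot persist in the same coordinate for more than $\binom{n}{2}$ iterations before being forced to decline globally. This is the step that actually uses the hypotheses $\tau>1$ and $\gamma > \sqrt{4/3}$: the margin $\gamma > \sqrt{4/3}$ supplies the slack needed for the rotation $Q$ appearing in Theorem \ref{theo:the lower bound} to be chosen so that the reduced matrix remains lower trapezoidal with its diagonals arranged compatibly with the weighted pivot rule, while $\tau>1$ guarantees that the resulting geometric decay is genuine.
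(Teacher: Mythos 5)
The paper does not actually prove this statement: its ``proof'' is the citation ``See Theorem 2 of \cite{FBA1999}'', so your attempt has to be measured against that analysis. Your skeleton (a scale-invariant potential, bounded at the start, bounded away from zero via Theorem \ref{theo:the lower bound} while no relation has been found, and decaying geometrically) has the right shape, but the load-bearing step --- that $\Phi_k=\max_{j}\gamma^{j}|h^{(k)}_{j,j}|$ drops by a factor $\tau$ every $\binom{n}{2}$ iterations --- is both unjustified and false as stated. The exchange at index $r$ replaces $h_{r,r}$ by $\sigma=\sqrt{\beta^{2}+\lambda^{2}}$ and $h_{r+1,r+1}$ by $h_{r,r}\lambda/\sigma$, so it preserves the product $|h_{r,r}h_{r+1,r+1}|$: the pivot entry does contract by $\tau$ (here $|\beta|\le|h_{r,r}|/2$ from Hermite reduction and $|\lambda|\le|h_{r,r}|/\gamma$ from the pivot rule give $1/\tau^{2}=1/4+1/\gamma^{2}<1$, which is the actual role of $\gamma>\sqrt{4/3}$ --- it has nothing to do with how $Q$ is chosen), but the neighbouring entry grows, and $\gamma^{r+1}|h'_{r+1,r+1}|$ can exceed the old maximum $\gamma^{r}|h_{r,r}|$ by up to a factor $\gamma$. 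So $\Phi$ is not a monovariant, and even granting some amortization, a pigeonhole over $n-1$ slots would produce a factor $n-1$, not $\binom{n}{2}$.

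The mechanism that actually works, and is what \cite{FBA1999} uses, is a weighted \emph{product} rather than a maximum: $\Pi(H)=\prod_{j=1}^{n-1}\nu_{j}^{\,n-j}$, where $\nu_{j}$ is $1/|h_{j,j}|$ capped above by a quantity of size $\gamma^{j}M_{\textbf{x}}$. Because $|h_{r,r}h_{r+1,r+1}|$ is invariant under the exchange, writing $a=|h_{r,r}|$ and $a'=|h'_{r,r}|\le a/\tau$ one gets $(a/a')^{n-r}\cdot(a'/a)^{n-r-1}=a/a'\ge\tau$, so $\Pi$ grows by at least $\tau$ per iteration; the exponents $n-j$ sum to $\binom{n}{2}$, which is where that binomial coefficient genuinely comes from; and the caps, together with $\max_{j}|h_{j,j}|\ge 1/M_{\textbf{x}}$ from Theorem \ref{theo:the lower bound}, keep $\Pi$ below $\bigl(\gamma^{n-1}M_{\textbf{x}}\bigr)^{\binom{n}{2}}$ until a relation appears (which is detected when a coordinate of the reduced $\bar{\textbf{x}}$ vanishes and the relation is read off a column of $B$, not of $A$). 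If you replace your max-potential and pigeonhole step by this product potential, the remaining bookkeeping in your outline goes through and yields the stated iteration bound.
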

\begin{proof} See Theorem 2 of \cite{FBA1999}.
\end{proof}
\begin{theorem}\label{theo:the upper bound}
Let $M_{\textbf{x}}$ be the smallest possible norm of any relation
for $\textbf{x}$. Let $\textbf{m}$ be any relation found by
PSLQ($\tau$). For all $\gamma
> \sqrt{4/3}$ for real vectors
 \begin{eqnarray*}
  |\textbf{m}| \leq \gamma^{n-2}M_{\textbf{x}}.
 \end{eqnarray*}
\end{theorem}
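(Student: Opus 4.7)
The strategy is to combine two ingredients. On one hand, Theorem \ref{theo:the lower bound} gives a \emph{lower} bound on the norm of any relation in terms of the diagonal of the working matrix $H$; applied to the minimum-norm relation, this forces some diagonal entry of $H$ to be relatively large. On the other hand, the PSLQ($\tau$) swap rule enforces a tight $\gamma$-controlled comparison between neighboring diagonals of $H$. Together, these squeeze the diagonals from both sides and then bound the norm of the output relation.

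I would begin by applying Theorem \ref{theo:the lower bound} at the iteration at which PSLQ($\tau$) terminates and outputs the relation $\textbf{m}$. Let $A$ be the accumulated unimodular matrix and $Q$ the unitary matrix needed so that $H = AH_{\textbf{x}}Q$ is lower trapezoidal with nonzero diagonal at that moment. Plugging in the minimum-norm relation $\textbf{m}_0$ with $|\textbf{m}_0| = M_{\textbf{x}}$ yields
\begin{eqnarray*}
\frac{1}{\max_{1\leq j \leq n-1}|h_{j,j}|} \leq M_{\textbf{x}},
\end{eqnarray*}
so there is an index $r$ with $|h_{r,r}| \geq 1/M_{\textbf{x}}$.

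Next I would invoke the specific invariant that PSLQ($\tau$) maintains on the diagonals. The algorithm's swap-and-reduce step selects the row maximizing $\gamma^{j}|h_{j,j}|$ to exchange, and a direct calculation — combining the effect of the exchange with the rotation $Q$ that restores lower-trapezoidal form — shows that this choice forces $|h_{j,j}| \leq \gamma\,|h_{j+1,j+1}|$ for adjacent indices once Hermite reduction has been applied. Iterating across the at most $n-2$ gaps between $r$ and any other index, one obtains
\begin{eqnarray*}
\min_{j}|h_{j,j}| \geq \gamma^{-(n-2)}\max_{j}|h_{j,j}| \geq \gamma^{-(n-2)} M_{\textbf{x}}^{-1}.
\end{eqnarray*}
Finally, when PSLQ terminates, the relation $\textbf{m}$ is read off as a row of the accumulated integer matrix. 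Using the trapezoidal structure of $H$ together with the Hermite-reduced off-diagonal entries (bounded in absolute value by $1/2$), one shows that $|\textbf{m}| \leq 1/\min_{j}|h_{j,j}|$. Chaining the three inequalities gives $|\textbf{m}| \leq \gamma^{n-2} M_{\textbf{x}}$.

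The main obstacle is the middle step: verifying that the swap rule propagates the factor-$\gamma$ comparison between every pair of adjacent diagonals in the presence of the rotation $Q$ and the preceding Hermite reduction. This is precisely where the hypothesis $\gamma > \sqrt{4/3}$ enters, since that is the threshold guaranteeing that the rotation introduced after a swap does not inflate neighboring diagonal ratios beyond $\gamma$. Once this invariant is established cleanly, the remaining steps are straightforward bookkeeping.
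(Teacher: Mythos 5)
The paper does not actually prove this statement; it cites Theorem~3 of the Ferguson--Bailey--Arno analysis of PSLQ, so your attempt must be measured against that source's argument. Your skeleton is the right one --- combine Theorem~\ref{theo:the lower bound} applied to a minimal relation (giving $\max_j |h_{j,j}| \ge 1/M_{\textbf{x}}$) with the $\gamma$-weighted row-selection rule and an expression for $|\textbf{m}|$ in terms of a diagonal entry of $H$ --- and the first of these three ingredients is correct. The other two, however, are not justified, and one rests on a claim that is false for PSLQ.

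First, PSLQ does not maintain the adjacent-diagonal invariant $|h_{j,j}| \le \gamma\,|h_{j+1,j+1}|$ that you iterate across the $n-2$ gaps. Unlike LLL, whose loop terminates only when a Lov\'asz-type condition holds everywhere, PSLQ (Algorithm~\ref{alg:integer relation}, Steps 7--11) simply keeps exchanging the row $r$ that maximizes the weighted diagonal and never revisits non-selected pairs, so nothing forces them to be $\gamma$-balanced at a given iteration; consequently $\min_j|h_{j,j}|$ admits no such lower bound. The correct argument uses the selection rule only at the terminal iteration, where $r=n-1$ is chosen: then $\gamma^{n-1}|h_{n-1,n-1}| \ge \gamma^{j}|h_{j,j}|$ for every $j$, whence $|h_{n-1,n-1}| \ge \gamma^{-(n-2)}\max_j|h_{j,j}| \ge \gamma^{-(n-2)}/M_{\textbf{x}}$, and no control of the minimum is needed. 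Second, your closing step $|\textbf{m}| \le 1/\min_j|h_{j,j}|$ ``from the trapezoidal structure and the size-reduced off-diagonals'' is not how the output norm is controlled: the relation is read off as $col_{n-1}B$ (a column of $B=A^{-1}$, not a row of $A$), and what is actually used is the exact identity $|col_{n-1}B| = 1/|h_{n-1,n-1}|$, which follows from the duality between $B$ and $H$ in the decomposition $H_{\textbf{x}} = A^{-1}HQ^{-1}$, not from the $1/2$-bounds on Hermite-reduced entries. Finally, the threshold $\gamma > \sqrt{4/3}$ does not enter to keep post-swap rotations from inflating diagonal ratios; it is the condition making $\tau = 1/\sqrt{1/\gamma^{2}+1/4} > 1$, i.e.\ it belongs to the progress analysis behind Theorem~\ref{theo:iterations}, and the present bound merely inherits it through the selection rule.
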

\begin{proof} See Theorem 3 of \cite{FBA1999}.
\end{proof}
\begin{remark}
For n=2, Theorem \ref{theo:the upper bound} proves that any relation
$0\neq \textbf{m} \in \mathbb{O}(\mathbf{R}^{2})$ found has norm
$|\textbf{m}|=M_{\textbf{x}}$. In other words, $PSLQ(\tau)$ finds a
shortest relation. For real numbers this corresponds to the case of
the Euclidean algorithm.
\end{remark}
Based on these theorems as above, and if there exists a known error
controlling $\varepsilon$, then an algorithm for obtaining the
integer relation can be designed as follows:
\newcounter{num}
\begin{algorithm}\label{alg:integer relation}
Parameterized
Integer Relation Construction \\
Input: a vector $\textbf{x}$, the upper bound $N$ on the height of
minimal polynomial, and an error $\varepsilon>0$;\\
Output: an integer relation $\textbf{m}$.
\begin{list}{Step \arabic{num}:}{\usecounter{num}\setlength{\rightmargin}{\leftmargin}}
\item Set $i:=1$, $\textbf{m}:=\mathbf{0}$, $\tau>2/\sqrt3$, and unitize the vector $\textbf{x}$ to $\bar{\textbf{x}}$;
\item Set $H_{\bar{\textbf{x}}}$ by definition \ref{def:Hx};
\item Produce matrix $D \in GL(n, \mathbb{O}(\mathbf{R}))$ using modified Hermite
Reduction by definition \ref{def:ModifHR};
\item Set $\bar{\textbf{x}}:=\bar{\textbf{x}}\cdot D^{-1}, H:=D\cdot H, A:=D\cdot A, B:=B\cdot D^{-1}$, \\
case 1: if $\bar{\textbf{x}}_{j}=0$, then $\textbf{m}:=col_{j}B$; \\
case 2: if $h_{i,i}<\varepsilon$, then $\textbf{m}:=col_{n-1}B$;
\item if $0<|\textbf{m}|_{\infty}\leq N$, then goto Step 12;\\
      if $|\textbf{m}|_{\infty}> N$, there is no such an integer
relation, algorithm terminating.
\item $i:=i+1$;
\item Choose an integer $r$, such that $\tau^{r}|h_{r,r}|\geq
\tau^{j}|h_{j,j}|$, for all $1\leq j \leq n-1$;
\item Define $\alpha:=h_{r,r}$, $\beta:= h_{r+1,r}$,
$\lambda:=h_{r+1,r+1}$, $\sigma:=\sqrt{\beta^2+\lambda^2}$;
\item Change $h_{r}$ to $h_{r+1}$, and define the permutation matrix R;
\item Set $\bar{\textbf{x}}:=\bar{\textbf{x}}\cdot R$, $H:= R\cdot H$, $A:=R\cdot A$,
$B:= B\cdot R$, if i=n-1, then goto Step 4;
\item Define $ Q:=(q_{i,j})\in U(n-1,\mathbf{R})$, $H:=H \cdot Q$, goto Step 4;
\item return $\textbf{m}$.
\end{list}
\end{algorithm}

By algorithm \ref{alg:integer relation}, we can find the integer
relation $U(n-1,\mathbf{R})$ of the vector
$\textbf{x}=(1,\tilde{\alpha},\tilde{\alpha}_{2},\cdots,\tilde{\alpha}_{n})$.
So, we get a nonzero polynomial of degree $n$, which denotes $G(x)$
for the rest of this paper, i.e.,
\begin{equation}\label{equ:polynomial}
G(x)=\textbf{m}\cdot(1,x,x^{2},\cdots,x^{n})^{T}.
\end{equation}
Our main task is to show that polynomial (\ref{equ:polynomial}) is
uniquely determined under assumptions, and discuss the controlling
error $\varepsilon$ in algorithm \ref{alg:integer relation} in the
next section.
\section{Reconstructing minimal polynomial from its approximation}
In this section, we will solve such a problem: For a given floating
number $\tilde{\alpha}$, which is an approximation of unknown
algebraic number $\alpha$, how do we obtain the exact value? At
first, we state some lemmas as follows:
\begin{lemma}\label{lem:errorctrol}
Let $f$ be a nonzero polynomial in $\mathbf{Z}[x]$ of degree $n$. If
$\varepsilon=\max_{1\leq i\leq
n}|\alpha^{i}-\tilde{\alpha_{i}}|$\footnote{$\varepsilon$ is defined
by the same way for the rest of this paper.}, where
$\tilde{\alpha}_{i}$ for $1\leq i \leq n$ are the rational
approximations to the powers $\alpha^{i}$ of an algebraic number
$\alpha$, and $\tilde{\alpha}_{0}=1$, then
\begin{equation}\label{equ:errorctrol}
     |f(\alpha)-f(\tilde{\alpha})|\leq \varepsilon \cdot n \cdot
     |f|_{\infty}.
\end{equation}
\end{lemma}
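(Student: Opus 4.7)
The plan is to read $f(\tilde{\alpha})$ in the sense of Lemma~\ref{lem:errorctrol}, namely as $\sum_{i=0}^n f_i \tilde{\alpha}_i$, where the vector $(\tilde{\alpha}_0,\tilde{\alpha}_1,\ldots,\tilde{\alpha}_n)$ approximates $(1,\alpha,\alpha^2,\ldots,\alpha^n)$ componentwise. Under this reading the inequality should be essentially a single application of the triangle inequality combined with the pointwise error bound $|\alpha^i - \tilde{\alpha}_i| \leq \varepsilon$.

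First I would write $f(x) = \sum_{i=0}^n f_i x^i$ with $f_i \in \mathbf{Z}$ and form the difference
\begin{equation*}
f(\alpha) - f(\tilde{\alpha}) \;=\; \sum_{i=0}^{n} f_i \bigl( \alpha^i - \tilde{\alpha}_i \bigr).
\end{equation*}
The $i=0$ term vanishes because of the convention $\tilde{\alpha}_0 = 1 = \alpha^0$ fixed in the statement, so only $n$ summands (from $i=1$ to $i=n$) actually contribute.

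Next I would apply the triangle inequality and then estimate each factor separately: $|f_i| \leq |f|_\infty$ by the definition of the height, and $|\alpha^i - \tilde{\alpha}_i| \leq \varepsilon$ by the definition of $\varepsilon$ as the maximum of these discrepancies over $1 \leq i \leq n$. Bounding all $n$ nonzero terms this way gives
\begin{equation*}
|f(\alpha) - f(\tilde{\alpha})| \;\leq\; \sum_{i=1}^{n} |f_i| \cdot |\alpha^i - \tilde{\alpha}_i| \;\leq\; n \cdot |f|_\infty \cdot \varepsilon,
\end{equation*}
which is exactly \eqref{equ:errorctrol}.

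There is really no main obstacle here: the lemma is a straightforward triangle-inequality estimate whose purpose is to set up the quantitative error analysis needed in the subsequent reconstruction arguments. The only subtlety worth flagging in the write-up is the convention that $f(\tilde{\alpha})$ denotes the linear combination $\sum f_i \tilde{\alpha}_i$ (not the polynomial evaluated at a single scalar), since otherwise the bound would involve powers of $\tilde{\alpha}$ and be considerably messier to state.
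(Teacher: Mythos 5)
Your proof is correct and is exactly the standard triangle-inequality argument the authors have in mind; the paper itself dismisses the proof as ``Clear,'' so your write-up simply supplies the omitted details (including the sensible observation that the $i=0$ term drops out and that $f(\tilde{\alpha})$ must be read as $\sum_i f_i\tilde{\alpha}_i$).
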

\begin{proof} Clear.
\end{proof}
\begin{lemma}\label{lem:belowbound}
Let $h$ and $g$ be two nonzero polynomials in $\mathbf{Z}[x]$ of
degree $n$ and $m$, respectively, and let $\alpha\in \mathbf{R}$ be
a zero of $h$ with $|\alpha|\leq 1$. If $h$ is irreducible and
$g(\alpha)\neq 0$, then
\begin{equation}\label{equ:belowbound}
     |g(\alpha)|\geq n^{-1}\cdot |h|^{-m}\cdot |g|^{1-n}.
\end{equation}
\end{lemma}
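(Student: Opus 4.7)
The plan is to invoke the classical resultant-based argument for Liouville-type lower bounds on values of integer polynomials at algebraic numbers. Since $h$ is irreducible in $\mathbf{Z}[x]$ and $g(\alpha)\neq 0$ while $h(\alpha)=0$, the polynomials $h$ and $g$ are coprime in $\mathbf{Q}[x]$, so their Sylvester resultant $R=\mathrm{Res}(h,g)$ is a nonzero integer and $|R|\ge 1$. Writing $h=a_n\prod_{i=1}^{n}(x-\alpha_i)$ with $\alpha=\alpha_1$ and $a_n$ the leading coefficient of $h$, the product formula $R=a_n^{m}\prod_{i=1}^{n}g(\alpha_i)$ rearranges to
$$|g(\alpha)|\;\ge\;\frac{|a_n|^{-m}}{\prod_{i=2}^{n}|g(\alpha_i)|}.$$

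Second, I would upper bound each conjugate factor $|g(\alpha_i)|$ for $i\ge 2$. Applying Cauchy--Schwarz to $g(\alpha_i)=\sum_{j=0}^{m}g_j\alpha_i^{j}$ yields $|g(\alpha_i)|\le|g|\cdot\bigl(\sum_{j=0}^{m}|\alpha_i|^{2j}\bigr)^{1/2}$, which is at most $|g|\cdot\sqrt{m+1}\cdot\max(1,|\alpha_i|)^{m}$. Landau's inequality $M(h)=|a_n|\prod_{i=1}^{n}\max(1,|\alpha_i|)\le|h|$, combined with the hypothesis $|\alpha|\le 1$ (which makes the $i=1$ factor in the product trivial), gives $\prod_{i=2}^{n}\max(1,|\alpha_i|)^{m}\le(|h|/|a_n|)^{m}$. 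Substituting into the previous display cancels the $|a_n|^{m}$ factors and produces a lower bound of the shape $C(n,m)^{-1}\cdot|g|^{1-n}\cdot|h|^{-m}$.

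The main obstacle I foresee is pinning down the specific constant $n^{-1}$ in (\ref{equ:belowbound}). My Cauchy--Schwarz step produces a prefactor like $(m+1)^{(n-1)/2}$, which is not in general bounded by $n$, so a sharper treatment is required. Promising routes include applying Hadamard's inequality directly to the Sylvester matrix of $h$ and $g$ to avoid splitting into individual conjugate factors, or using a more refined pointwise estimate for $|g(\alpha_i)|$ that exploits the specific structure of $h$ as an irreducible minimal polynomial. Once the constant is settled, the remaining step is purely algebraic rearrangement of the inequalities above, yielding the stated bound.
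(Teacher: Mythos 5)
Your resultant-times-conjugates route does not reach the stated constant, and the shortfall you flag at the end is a genuine gap rather than a bookkeeping issue: bounding each $|g(\alpha_i)|$ via Cauchy--Schwarz and then invoking Landau's inequality $M(h)\le |h|$ yields a prefactor $(m+1)^{-(n-1)/2}$ in place of $n^{-1}$, and since $(m+1)^{(n-1)/2}$ generally far exceeds $n$, the inequality you obtain is strictly weaker than the one claimed. Note also that the paper itself supplies no argument here --- it simply cites Proposition~(1.6) of \cite{KLL1988} --- so the entire burden of producing the constant $n^{-1}$ falls on the proof, and the product-over-conjugates decomposition is the wrong tool for it.

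The device that actually yields $n^{-1}$ is the B\'ezout identity rather than the product formula. Since $h$ is irreducible and $g(\alpha)\neq 0$, the polynomials $h$ and $g$ are coprime, so there exist $s,t\in\mathbf{Z}[x]$ with $\deg t\le n-1$ and $sh+tg=\mathrm{Res}(h,g)$, a nonzero integer. Evaluating at $\alpha$ kills the $sh$ term and gives $|t(\alpha)|\cdot|g(\alpha)|=|\mathrm{Res}(h,g)|\ge 1$. By Cramer's rule each coefficient $t_j$ is, up to sign, an $(n+m-1)\times(n+m-1)$ minor of the Sylvester matrix containing all $m$ rows of $h$-coefficients and only $n-1$ rows of $g$-coefficients, so Hadamard's inequality gives $|t_j|\le |h|^{m}|g|^{n-1}$; since $|\alpha|\le 1$ and $t$ has at most $n$ terms, $|t(\alpha)|\le n\,|h|^{m}|g|^{n-1}$, which is exactly the bound needed. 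Your suggested fallback of ``Hadamard on the Sylvester matrix'' points at the right matrix but the wrong determinant: applied to the full matrix it only bounds $|\mathrm{Res}(h,g)|$ from above, which does not help a lower bound on $|g(\alpha)|$; it must be applied to the cofactor minors that define $t$.
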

\begin{proof} See Proposition(1.6) of \cite{KLL1988}. If $|\alpha|> 1$, a
simple transform of it does.
\end{proof}
\begin{corollary}\label{cor:extendbelowbound}
Let $h$ and $g$ be two nonzero polynomials in $\mathbf{Z}[x]$ of
degrees $n$ and $m$, respectively, and let $\alpha\in \mathbf{R}$ be
a zero of $h$ with $|\alpha|\leq 1$. If $h$ is irreducible and
$g(\alpha)\neq 0$, then
\begin{equation}\label{equ:belowbound}
     |g(\alpha)|\geq n^{-1}\cdot (n+1)^{-\frac{m}{2}}\cdot (m+1)^{\frac{1-n}{2}}\cdot |h|^{-m}_{\infty}\cdot |g|^{1-n}_{\infty}.
\end{equation}
\end{corollary}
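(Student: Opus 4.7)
The plan is to derive Corollary \ref{cor:extendbelowbound} directly from Lemma \ref{lem:belowbound} by translating the Euclidean norms $|h|$ and $|g|$ appearing in that lemma into the $L_{\infty}$ norms $|h|_{\infty}$ and $|g|_{\infty}$ required by the statement. The bridge is the standard comparison between the $\ell_{2}$ and $\ell_{\infty}$ norms on $\mathbf{R}^{d+1}$: for any polynomial $p(X)=\sum_{i=0}^{d}p_{i}X^{i}\in\mathbf{Z}[x]$ one has
\begin{equation*}
  |p|_{\infty}\;\leq\;|p|\;\leq\;\sqrt{d+1}\,|p|_{\infty}.
\end{equation*}
I will apply the upper inequality to both $h$ (with $d=n$) and $g$ (with $d=m$).

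First I would invoke Lemma \ref{lem:belowbound}, which under the present hypotheses yields
$|g(\alpha)|\geq n^{-1}\,|h|^{-m}\,|g|^{1-n}$.
Next, from $|h|\leq\sqrt{n+1}\,|h|_{\infty}$ and the fact that $-m\leq 0$, raising to a non-positive power reverses the inequality for positive reals, so $|h|^{-m}\geq(n+1)^{-m/2}\,|h|_{\infty}^{-m}$. Analogously, from $|g|\leq\sqrt{m+1}\,|g|_{\infty}$ and $1-n\leq 0$, I obtain $|g|^{1-n}\geq(m+1)^{(1-n)/2}\,|g|_{\infty}^{1-n}$. Substituting both estimates into the bound from Lemma \ref{lem:belowbound} gives exactly
\begin{equation*}
  |g(\alpha)|\;\geq\;n^{-1}\,(n+1)^{-m/2}\,(m+1)^{(1-n)/2}\,|h|_{\infty}^{-m}\,|g|_{\infty}^{1-n},
\end{equation*}
which is the claimed inequality.

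The proof is essentially bookkeeping, so I do not anticipate a genuine obstacle; the only point requiring care is the sign of the exponents when converting between norms, since inequalities must be reversed when raising to a non-positive power. As long as one checks that $-m\leq 0$ and $1-n\leq 0$ (guaranteed by $m\geq 0$ and the implicit assumption $n\geq 1$, which is forced by $h$ being a nonzero polynomial having the algebraic root $\alpha$), the substitution goes through cleanly and no further hypothesis beyond those of Lemma \ref{lem:belowbound} is needed.
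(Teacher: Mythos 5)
Your proposal is correct and follows essentially the same route as the paper: both invoke Lemma \ref{lem:belowbound} and then convert the Euclidean norms to $L_{\infty}$ norms via $|p|\leq\sqrt{d+1}\,|p|_{\infty}$. Your version is actually slightly more careful than the paper's, since you explicitly justify the reversal of the inequalities for the non-positive exponents $-m$ and $1-n$, a step the paper leaves implicit.
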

\begin{proof} First notice that $|f|^{2}\leq (n+1)\cdot |f|^{2}_{\infty}$
holds for any polynomial $f$ of degree at most $n>0$, so $|f|\leq
\sqrt{n+1}\cdot |f|_{\infty}$. From Lemma \ref{lem:belowbound}, we
get
\begin{eqnarray*}
     |g(\alpha)|\geq n^{-1}\cdot (n+1)^{-\frac{m}{2}}\cdot (m+1)^{\frac{1-n}{2}}\cdot |h|^{-m}_{\infty}\cdot |g|^{1-n}_{\infty}.
\end{eqnarray*}
This proves Corollary \ref{cor:extendbelowbound}.
\end{proof}
\begin{theorem}\label{theo:the minimal polynomial}
Let $\tilde{\alpha}$ be an approximate value to an unknown algebraic
number $\alpha$ with degree $n>0$, $N$ be the upper bound on the
height of minimal polynomial of $\alpha$. For any $G(x)$ in
$\mathbf{Z}[x]$ with degree n, if
\begin{eqnarray*}
|G(\tilde{\alpha})|<n^{-1}\cdot (n+1)^{-n+\frac{1}{2}}\cdot
|G|^{-n}_{\infty}\cdot N^{1-n}-n\cdot\varepsilon \cdot |G|_{\infty},
\end{eqnarray*}
then
\begin{eqnarray*}
|G(\alpha)|<n^{-1}\cdot (n+1)^{-n+\frac{1}{2}}\cdot
|G|^{-n}_{\infty}\cdot N^{1-n}.
\end{eqnarray*}
\end{theorem}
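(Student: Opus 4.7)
The plan is to exploit the fact that the claimed inequality for $|G(\alpha)|$ is nothing more than a slack‐adjusted version of the hypothesis on $|G(\tilde\alpha)|$, the slack being precisely the evaluation error estimated by Lemma \ref{lem:errorctrol}. So I would approach the proof as a straightforward triangle-inequality argument, without needing to revisit Corollary \ref{cor:extendbelowbound} (which is what made the numerical quantity $n^{-1}(n+1)^{-n+1/2}|G|_\infty^{-n}N^{1-n}$ meaningful in the first place, but whose role here is only to serve as an abstract target threshold).

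First, I would invoke Lemma \ref{lem:errorctrol} applied to the polynomial $f := G$ of degree $n$, together with the rational approximations $\tilde\alpha_i$ of $\alpha^i$ whose maximal deviation is $\varepsilon$. This yields
\begin{equation*}
|G(\alpha)-G(\tilde\alpha)| \;\le\; \varepsilon \cdot n \cdot |G|_\infty.
\end{equation*}
Next, I would apply the triangle inequality in the form
\begin{equation*}
|G(\alpha)| \;\le\; |G(\tilde\alpha)| + |G(\alpha)-G(\tilde\alpha)| \;\le\; |G(\tilde\alpha)| + n\,\varepsilon\,|G|_\infty.
\end{equation*}

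Finally, I would substitute the hypothesis $|G(\tilde\alpha)| < n^{-1}(n+1)^{-n+1/2}|G|_\infty^{-n}N^{1-n} - n\varepsilon|G|_\infty$ into the right-hand side; the $n\varepsilon|G|_\infty$ terms cancel exactly, delivering
\begin{equation*}
|G(\alpha)| \;<\; n^{-1}(n+1)^{-n+1/2}|G|_\infty^{-n}N^{1-n},
\end{equation*}
which is the desired conclusion. There is really no obstacle here: the nontrivial content of the section lies in the preceding Corollary \ref{cor:extendbelowbound} (which provides a Mignotte-type lower bound forcing $G(\alpha)=0$ once $|G(\alpha)|$ drops below the stated threshold) and in calibrating $\varepsilon$ appropriately; this theorem merely packages the error-propagation step that bridges $\tilde\alpha$ and $\alpha$.
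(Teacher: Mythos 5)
Your proposal is correct and follows essentially the same route as the paper's own proof: both apply Lemma \ref{lem:errorctrol} to $G$, use the triangle inequality to get $|G(\alpha)|\le|G(\tilde\alpha)|+n\varepsilon|G|_\infty$, and then substitute the hypothesis so the $n\varepsilon|G|_\infty$ terms cancel. Your closing remark correctly identifies that the substantive content lies elsewhere (in Corollary \ref{cor:extendbelowbound} and the calibration of $\varepsilon$), while this theorem is just the error-propagation step.
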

\begin{proof}
 Let $\alpha\in \mathbf{R}$ be with $|\alpha|\leq 1$. From Lemma
\ref{lem:errorctrol}, we notice that
$|G(\alpha)-G(\tilde{\alpha})|\leq \varepsilon \cdot n \cdot
|G|_{\infty}$, and
$$|G(\alpha)|-|G(\tilde{\alpha})|\leq |G(\alpha)-G(\tilde{\alpha})|,$$ so,
\begin{equation}\label{equ:theorem4}
 |G(\alpha)|\leq|G(\tilde{\alpha})|+n\cdot \varepsilon \cdot
|G|_{\infty}.
\end{equation}
From the assumption of the theorem, since
\begin{equation}\label{equ:max value}
|G(\tilde{\alpha})|<n^{-1}\cdot (n+1)^{-n+\frac{1}{2}}\cdot
|G|^{-n}_{\infty}\cdot N^{1-n}-n\cdot\varepsilon \cdot |G|_{\infty},
\end{equation}
combined with (\ref{equ:theorem4}), we have proved Theorem
\ref{theo:the minimal polynomial}.
\end{proof}
\begin{corollary}\label{cor:minimalpoly}
Let $\tilde{\alpha}$ be an approximate value to an unknown algebraic
number $\alpha$ with degree $n>0$, $N$ be the upper bound on the
height of minimal polynomial of $\alpha$. For any $G(x)$ in
$\mathbf{Z}[x]$ with degree $n$, if $|G(\alpha)|<n^{-1}\cdot
(n+1)^{-n+\frac{1}{2}}\cdot |G|^{-n}_{\infty}\cdot N^{1-n}$, then
\begin{equation}\label{equ:belowbound}
    G(\alpha)=0.
\end{equation}
The primitive part of polynomial $G(x)$ is the minimal polynomial of
algebraic number $\alpha$.
\end{corollary}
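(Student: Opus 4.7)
I would split the corollary into two claims: (i) $G(\alpha)=0$, and (ii) the primitive part of $G$ is the minimal polynomial of $\alpha$. The plan is to establish (i) by contradiction using Corollary~\ref{cor:extendbelowbound}, and then to deduce (ii) from elementary divisibility facts.

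For part (i), I would suppose toward a contradiction that $G(\alpha)\neq 0$. Let $h\in \mathbf{Z}[x]$ denote the minimal polynomial of $\alpha$, so that $h$ is irreducible of degree exactly $n$ with $|h|_\infty\leq N$. Now apply Corollary~\ref{cor:extendbelowbound} to the pair $(h,G)$ with $m=\deg G=n$; the exponents $(n+1)^{-m/2}$ and $(m+1)^{(1-n)/2}$ collapse to $(n+1)^{-n+\frac{1}{2}}$, yielding
\[
|G(\alpha)|\;\geq\; n^{-1}\cdot (n+1)^{-n+\frac{1}{2}}\cdot |h|_\infty^{-n}\cdot |G|_\infty^{1-n}.
\]
Substituting $|h|_\infty\leq N$ and comparing against the hypothesized upper bound $n^{-1}\cdot (n+1)^{-n+\frac{1}{2}}\cdot |G|_\infty^{-n}\cdot N^{1-n}$ produces the sought contradiction, so in fact $G(\alpha)=0$.

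For part (ii), once $G(\alpha)=0$ and $h$ is the minimal polynomial of $\alpha$, irreducibility in $\mathbf{Q}[x]$ gives $h\mid G$ in $\mathbf{Q}[x]$, and Gauss's lemma (using that $h$ is primitive) upgrades this to $h\mid G$ in $\mathbf{Z}[x]$. Since $\deg h=\deg G=n$, the quotient must be a nonzero integer constant $c$, so $G=c\cdot h$. Hence the content of $G$ is $|c|$, and the primitive part of $G$ coincides with $\pm h$, i.e.\ with the minimal polynomial of $\alpha$.

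The main obstacle is bookkeeping-level rather than conceptual: one must carefully verify that, after substituting $|h|_\infty\leq N$ (and implicitly using the algorithmic constraint $|G|_\infty\leq N$ enforced by Algorithm~\ref{alg:integer relation}), the lower bound from Corollary~\ref{cor:extendbelowbound} genuinely exceeds the hypothesized upper bound on $|G(\alpha)|$; the exponents on $|G|_\infty$ and $N$ interact nontrivially and must be tracked precisely. A minor secondary issue is that Lemma~\ref{lem:belowbound} (and hence Corollary~\ref{cor:extendbelowbound}) is stated under $|\alpha|\leq 1$; the case $|\alpha|>1$ is dispatched by the reciprocal substitution $x\mapsto 1/x$ indicated in the proof of Lemma~\ref{lem:belowbound}, which preserves both degree and height of the polynomials involved and so leaves the whole argument intact.
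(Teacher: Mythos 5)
Your decomposition and your route for part (i) coincide with the paper's: the paper also argues by contradiction from the Liouville-type lower bound (it cites Lemma \ref{lem:belowbound} directly where you cite Corollary \ref{cor:extendbelowbound}, which is the same estimate in height form). Your part (ii), via irreducibility, Gauss's lemma, and the degree count $\deg h=\deg G=n$, is in fact \emph{more} complete than the paper's, which essentially asserts that the primitive part is irreducible and equal to the minimal polynomial ``according to the definition.'' Your handling of $|\alpha|>1$ by the reciprocal substitution also matches the paper's one-line remark.

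However, the step you defer as ``bookkeeping-level'' is a genuine gap, and it does not close. Applying Corollary \ref{cor:extendbelowbound} with $h$ the minimal polynomial ($\deg h=n$, $|h|_\infty\leq N$) and $g=G$ ($m=\deg G=n$) yields
\[
|G(\alpha)|\;\geq\; n^{-1}(n+1)^{-n+\frac12}\,|h|_\infty^{-n}\,|G|_\infty^{1-n}\;\geq\; n^{-1}(n+1)^{-n+\frac12}\,N^{-n}\,|G|_\infty^{1-n},
\]
whereas the hypothesis caps $|G(\alpha)|$ by $n^{-1}(n+1)^{-n+\frac12}\,|G|_\infty^{-n}\,N^{1-n}$. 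The ratio of the derived lower bound to this hypothesized upper bound is exactly $|G|_\infty/N$, which is $\leq 1$ under the only available constraint $|G|_\infty\leq N$; so the lower bound sits \emph{below} the threshold and no contradiction results unless $|G|_\infty=N$. In other words, the exponents on $N$ and $|G|_\infty$ in the corollary's threshold are transposed relative to what Corollary \ref{cor:extendbelowbound} actually delivers. You should know that the paper's own proof has the identical defect: it simply writes down the lower bound in the transposed form $|G|_\infty^{-n}N^{1-n}$ as though it followed from Lemma \ref{lem:belowbound}. Your more careful tracking therefore exposes an error in the statement itself rather than a flaw unique to your argument; the statement (and, downstream, Theorem \ref{theo:the n-th algebraic number }) is repaired by replacing the threshold with $n^{-1}(n+1)^{-n+\frac12}N^{-n}|G|_\infty^{1-n}$, or with the uniform $n^{-1}(n+1)^{-n+\frac12}N^{1-2n}$ once $|G|_\infty\leq N$ is enforced by Algorithm \ref{alg:integer relation}, after which your contradiction (and your part (ii)) goes through verbatim.
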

\begin{proof} (Proof by Contradiction) Let $\alpha\in \mathbf{R}$ be with
$|\alpha|\leq 1$. According to Lemma \ref{lem:belowbound}, suppose
on the contrary that $G(\alpha)\neq 0$, then
$$|G(\alpha)|\geq n^{-1}\cdot (n+1)^{-n+\frac{1}{2}}\cdot
|G|^{-n}_{\infty}\cdot N^{1-n}.$$ From the assumption of the
corollary, we have
$$|G(\alpha)|<n^{-1}\cdot
(n+1)^{-n+\frac{1}{2}}\cdot |G|^{-n}_{\infty}\cdot N^{1-n}.$$
However, it leads to a contradiction. So, $G(\alpha)=0$.\\
Let $G(x)=\sum_{i=0}^{n}{a_ix^i}$, which is constructed by the
parameterized integer relation construction algorithm from the
vector $\mathrm{x}=(1,{\alpha},{\alpha}^{2},\cdots,{\alpha}^{n})$.
Since algebraic number $\alpha$ with degree $n>0$, according to the
definition of minimal polynomial, then the primitive polynomial of
$G(x)$, denoted by $pp(G(x))$. Hence $pp(G(x))$ is just irreducible
and equal
to $g(x)$. Of course, it is unique.\\
This proves Corollary \ref{cor:minimalpoly}.
\end{proof}

\subsection{Obtaining minimal polynomial by approximation}
If $\alpha$ is a real number, then by definition $\alpha$ is
algebraic exactly, for some $n$, the vector
\begin{equation}\label{equ:the vector}
(1,\alpha,\alpha^{2},\cdots,\alpha^{n})
\end{equation}
has an integer relation. The integral coefficients polynomial of
lowest degree, whose root an algebraic number $\alpha$ is, is
determined uniquely up to a constant multiple; it is called the
$minimal$ $polynomial$ for $\alpha$. Integer relation algorithm can
be employed to search for minimal polynomial in a straightforward
way by simply feeding them the vector (\ref{equ:the vector}) as
their input. Let $\tilde{\alpha}$ be an approximate value belonging
to an unknown algebraic number $\alpha$, considering the vector
$\textbf{v}=(1,\tilde{\alpha},\tilde{\alpha}^{2},\cdots,\tilde{\alpha}^{n})$,
how to obtain the exact minimal polynomial from its approximate
value? We have the same technique answer to the question from the
following theorem.
\begin{theorem}\label{theo:the n-th algebraic number }
Let $\tilde{\alpha}$ be an approximate value belonging to an unknown
algebraic number $\alpha$ of degree $n>0$. If
\begin{equation}\label{equ:n-th maxmin}
\varepsilon=|\alpha-\tilde{\alpha}|<1/(n^{2}(n+1)^{n-\frac{1}{2}}N^{2n}),
\end{equation}
where $N$ is the upper bound on the height of its minimal
polynomial, then $G(\alpha)=0$, and the primitive part of $G(x)$ is
its minimal polynomial.
\end{theorem}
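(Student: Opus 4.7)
The plan is to chain Algorithm \ref{alg:integer relation}, Theorem \ref{theo:the minimal polynomial}, and Corollary \ref{cor:minimalpoly}. The target conclusion---that $G(\alpha)=0$ and that $pp(G(x))$ is the minimal polynomial of $\alpha$---is precisely the conclusion of Corollary \ref{cor:minimalpoly}, so it suffices to verify its hypothesis
\[|G(\alpha)|<n^{-1}(n+1)^{-n+\frac{1}{2}}|G|_{\infty}^{-n}N^{1-n}.\]
Theorem \ref{theo:the minimal polynomial} in turn reduces this to a sufficiently small bound on $|G(\tilde{\alpha})|$, namely
\[|G(\tilde{\alpha})|<n^{-1}(n+1)^{-n+\frac{1}{2}}|G|_{\infty}^{-n}N^{1-n}-n\varepsilon|G|_{\infty}.\]

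The first step is to feed the vector $(1,\tilde{\alpha},\tilde{\alpha}^{2},\dots,\tilde{\alpha}^{n})$ into Algorithm \ref{alg:integer relation}. By design (Step 4 Case 2, together with the acceptance test in Step 5), the returned integer relation $\mathbf{m}$ determines a polynomial $G(x)\in\mathbf{Z}[x]$ of degree at most $n$ with $|G|_{\infty}\le N$, and with $|G(\tilde{\alpha})|$ controlled by the termination threshold supplied to the algorithm. Choosing this threshold to be of the same order as the admissible gap in the hypothesis of Theorem \ref{theo:the minimal polynomial}, and using $|G|_{\infty}\le N$ to replace $|G|_{\infty}^{-n}$ by $N^{-n}$ on the right-hand side, the required inequality collapses to one of the form $\varepsilon<c/(n^{2}(n+1)^{n-\frac{1}{2}}N^{2n})$, which is exactly (\ref{equ:n-th maxmin}) up to an absorbable constant. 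Corollary \ref{cor:minimalpoly} then immediately yields both $G(\alpha)=0$ and the minimality of $pp(G(x))$.

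The main obstacle I anticipate is reconciling the two notions of error actually present in the paper. Lemma \ref{lem:errorctrol} quantifies the discrepancy using $\varepsilon=\max_{i}|\alpha^{i}-\tilde{\alpha}_{i}|$, whereas (\ref{equ:n-th maxmin}) is stated in terms of $\varepsilon=|\alpha-\tilde{\alpha}|$ alone. Bridging them relies on the elementary telescoping estimate $|\alpha^{i}-\tilde{\alpha}^{i}|\le i\cdot \max(|\alpha|,|\tilde{\alpha}|)^{i-1}\cdot|\alpha-\tilde{\alpha}|$, which---under the standing normalization $|\alpha|\le 1$ carried over from Lemma \ref{lem:belowbound} and Corollary \ref{cor:extendbelowbound}---introduces only a polynomial-in-$n$ inflation. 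Threading this factor through Theorem \ref{theo:the minimal polynomial} and verifying that the constants still leave room for (\ref{equ:n-th maxmin}) is the bookkeeping that needs the most care; once it is done, the conclusion is immediate.
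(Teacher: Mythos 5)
Your proposal follows essentially the same route as the paper: feed $(1,\tilde{\alpha},\dots,\tilde{\alpha}^{n})$ to Algorithm \ref{alg:integer relation}, use the acceptance test of Step 5 to guarantee $|G|_{\infty}\leq N$, and then chain Theorem \ref{theo:the minimal polynomial} with Corollary \ref{cor:minimalpoly}. The one place you go beyond the paper is the issue you flag at the end, and it is a real one: Lemma \ref{lem:errorctrol} (and hence Theorem \ref{theo:the minimal polynomial}) is stated for $\varepsilon=\max_{i}|\alpha^{i}-\tilde{\alpha}_{i}|$, while (\ref{equ:n-th maxmin}) bounds only $|\alpha-\tilde{\alpha}|$; the paper's proof silently identifies the two. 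Your telescoping bound $|\alpha^{i}-\tilde{\alpha}^{i}|\leq i\cdot\max(|\alpha|,|\tilde{\alpha}|)^{i-1}\cdot|\alpha-\tilde{\alpha}|$ is the right repair, but the resulting inflation is a factor of $n$, not an ``absorbable constant'': carrying it through the case $G(\tilde{\alpha})=0$ turns the requirement into $\varepsilon<1/(n^{3}(n+1)^{n-\frac{1}{2}}N^{2n})$, one power of $n$ stronger than (\ref{equ:n-th maxmin}) as printed, so the bookkeeping you defer does not in fact close under the stated constant. Be aware also that your appeal to ``the termination threshold controls $|G(\tilde{\alpha})|$'' is exactly as much as the paper itself offers for the case $G(\tilde{\alpha})\neq 0$ (its Case 1 merely checks that the right-hand side of the hypothesis of Theorem \ref{theo:the minimal polynomial} is positive, not that $|G(\tilde{\alpha})|$ lies below it); a fully rigorous version of either argument would have to make explicit the link between the stopping criterion $h_{i,i}<\varepsilon$ in Algorithm \ref{alg:integer relation} and the size of $|\mathbf{m}\cdot\mathbf{x}|$.
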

\begin{proof}
 Let $\alpha\in \mathbf{R}$ be with $|\alpha|\leq 1$. From Theorem
\ref{theo:the minimal polynomial} and Corollary
\ref{cor:minimalpoly}, it is obvious that
$$G(\alpha)=0,$$
if and only if
\begin{equation}\label{equ:least bound}
|G(\alpha)|<n^{-1}\cdot (n+1)^{-n+\frac{1}{2}}\cdot
|G|^{-n}_{\infty}\cdot N^{1-n}.
\end{equation}
Under the assumption of the theorem, we get the upper bound of
degree $n$ and an approximate value $\tilde{\alpha}$
belonging to an unknown algebraic number $\alpha$.\\
For substituting the approximate value $\tilde{\alpha}$ in $G(x)$, denoted by $G(\tilde{\alpha })$, there are two cases: \\
Case 1: $G(\tilde{\alpha})\neq0$, $|G(\tilde{\alpha})|>0$. We have
the inequality (\ref{equ:max value})holds, i.e.,
\begin{equation}\label{equ:inequalitysolve}
0<n^{-1}\cdot(n+1)^{-n+\frac{1}{2}}\cdot |G|^{-n}_{\infty}\cdot
N^{1-n}-n\cdot\varepsilon \cdot |G|_{\infty}.
\end{equation}
Clearly, the inequality (\ref{equ:inequalitysolve}) satisfies from the condition (\ref{equ:n-th maxmin}). This proves the Case 1. \\
Case 2: $G(\tilde{\alpha})=0$. From Lemma \ref{lem:errorctrol}, we
have
$|G(\alpha)-G(\tilde{\alpha})|<n\cdot\varepsilon\cdot|G|_{\infty}$,
hence $|G(\alpha)|<n\cdot\varepsilon\cdot|G|_{\infty}$.
 In order to satisfy condition (\ref{equ:least bound}), we only need the following inequality holds,
\begin{equation}\label{equ:epsion2}
 n\cdot\varepsilon \cdot |G|_{\infty}<
n^{-1}\cdot(n+1)^{-n+\frac{1}{2}}\cdot |G|^{-n}_{\infty}\cdot
N^{1-n}.
\end{equation}
From theorem \ref{theo:the upper bound}, and algorithm
\ref{alg:integer relation} in Step 5, $|G|_{\infty}$ is not more
than $N$. Hence we replace $|G|_{\infty}$ by $N$. So the correctness
of the inequality (\ref{equ:epsion2}) follows from (\ref{equ:n-th
maxmin}). This proves Theorem \ref{theo:the n-th algebraic number }.
\end{proof}
It is easiest to appreciate the theorem by seeing how it justifies
the following algorithm for obtaining minimal polynomials from its
approximation:
\begin{algorithm}\label{alg:n-th algebraic}Reconstructing Minimal Polynomial \\
Input: a floating number ($\tilde{\alpha}$, $n$, $N$) belonging to
an unknown  algebraic number $\alpha$,
i.e., satisfying (\ref{equ:n-th maxmin}).\\
Output: $g(x)$, the minimal polynomial of $\alpha$.
\begin{list}{Step \arabic{num}:}{\usecounter{num}\setlength{\rightmargin}{\leftmargin}}
\item Construct the vector $v$;
\item Compute $\varepsilon$ satisfying (\ref{equ:n-th maxmin});
\item Call algorithm \ref{alg:integer relation} to find an integer
relation $w$ for $v$;
\item Obtain $w(x)$ the corresponding polynomial;
\item Evaluate the primitive part of $w(x)$ to $g(x)$;
\item return $g(x)$.
\end{list}
\end{algorithm}
\begin{theorem}
Algorithm \ref{alg:n-th algebraic} works correctly as specified and
uses O($n(logn+logN)$) binary bit operations, where  $n$ and $N$ are
the degree and height of its minimal polynomial, respectively.
\end{theorem}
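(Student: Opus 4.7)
My plan is to separate the theorem into two claims -- correctness of the output and the asymptotic bit-cost bound -- and to handle them in that order.

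For correctness, I would trace Algorithm \ref{alg:n-th algebraic} step by step. Step 1 assembles the vector $\mathbf v=(1,\tilde\alpha,\dots,\tilde\alpha^n)$. Step 2 fixes $\varepsilon$ so that the input hypothesis $|\alpha-\tilde\alpha|<\varepsilon$ entails the inequality (\ref{equ:n-th maxmin}). Step 3 calls Algorithm \ref{alg:integer relation} on $\mathbf v$ with this $\varepsilon$; by Theorems \ref{theo:iterations} and \ref{theo:the upper bound} the call terminates with an integer relation $w$ whose height cap $|w|_\infty\le N$ is enforced by the guard in Step 5 of Algorithm \ref{alg:integer relation}. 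Step 4 reassembles $w$ into the polynomial $G(x)$ of (\ref{equ:polynomial}). At this point Theorem \ref{theo:the n-th algebraic number } applies verbatim and forces $G(\alpha)=0$, and Corollary \ref{cor:minimalpoly} identifies the primitive part of $G$ with the minimal polynomial $g$ of $\alpha$. Steps 5 and 6 extract and return this primitive part, finishing the correctness half.

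For the complexity, I would convert (\ref{equ:n-th maxmin}) into a bit-precision estimate. Setting $D=\lceil\log_2(1/\varepsilon)\rceil$, the inequality reads
\begin{equation*}
D \;\le\; 2\log_2 n + (n-\tfrac12)\log_2(n+1) + 2n\log_2 N \;=\; O\bigl(n(\log n + \log N)\bigr),
\end{equation*}
so the required precision on $\tilde\alpha$, and hence the bit size of each coordinate of $\mathbf v$, already matches the advertised bound. The assembly of $\mathbf v$, the setup of $H_{\bar{\mathbf v}}$ according to Definition \ref{def:Hx}, the modified Hermite reduction of Definition \ref{def:ModifHR}, and the primitive-part extraction each reduce to a bounded number of arithmetic operations on $D$-bit numbers, which I would absorb into the same asymptotic class after amortising against the geometric bit-reduction per PSLQ$(\tau)$ iteration promised by Theorem \ref{theo:iterations}.

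The main obstacle I anticipate is precisely this last bookkeeping step: a naive reading of Theorem \ref{theo:iterations} yields $O(\binom{n}{2}\log(\gamma^{n-1}M_{\mathbf v})/\log\tau)$ iterations, each acting on an $n\times(n-1)$ lower-trapezoidal matrix of $D$-bit entries, so multiplying iteration count by per-iteration arithmetic cost by bit cost per operation overshoots the claimed $O(n(\log n + \log N))$. To close this gap I would need either to interpret the theorem as measuring the \emph{input precision} in bits (Digits) rather than a total RAM bit-operation count -- an interpretation consistent with the paper's emphasis on Digits comparisons with the LLL approach in the abstract -- or to develop a sharper amortised analysis of PSLQ$(\tau)$ that exploits the scale invariance of the $h_{i,j}$ from Definition \ref{def:Hx} together with the unimodularity of the successive Hermite reductions. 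I would flag the chosen interpretation explicitly before committing to the detailed accounting.
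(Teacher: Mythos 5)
Your correctness half is exactly the paper's argument: the authors' entire proof reads ``Correctness follows from Theorem \ref{theo:the n-th algebraic number }. The cost of the algorithm is O($n(logn+logN)$) binary bit operations obviously.'' So your step-by-step trace through Algorithm \ref{alg:n-th algebraic}, landing on Theorem \ref{theo:the n-th algebraic number } and Corollary \ref{cor:minimalpoly}, is the intended (and only) correctness argument, just spelled out more explicitly than the paper bothers to. For the complexity half you have done strictly more than the paper, which offers no calculation at all. Your computation that $\log_2(1/\varepsilon) = 2\log_2 n + (n-\tfrac12)\log_2(n+1)+2n\log_2 N = O(n(\log n+\log N))$ is the only quantity in sight that actually matches the stated bound, and your suspicion is well founded: this is the required input precision (the \emph{Digits} of $\tilde\alpha$), which is precisely what the paper's Table \ref{ta1} and abstract emphasize when comparing against LLL. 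A literal total-bit-operation count cannot be $O(n(\log n+\log N))$, since by Theorem \ref{theo:iterations} PSLQ($\tau$) alone may take $\binom{n}{2}\log(\gamma^{n-1}M_{\mathbf x})/\log\tau$ iterations on an $n\times(n-1)$ matrix; the paper never reconciles this, and no sharper amortised analysis is given there. So the ``gap'' you flag is a gap in the paper's own statement, not in your argument; the interpretation you propose (precision in bits rather than RAM bit operations) is the one consistent with the rest of the paper, and adopting it explicitly completes the proof at the level of rigour the authors themselves supply.
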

\begin{proof}
 Correctness follows from Theorem \ref{theo:the n-th
algebraic number }. The cost of the algorithm is O($n(logn+logN)$)
binary bit operations obviously.
\end{proof}

\subsection{Some applications}
In this subsection, we discuss some applications to the
practicalities. The method of obtaining exact minimal polynomial
from an approximate root can be extended to the set of complex
numbers and many applications in computer algebra and science.

This yields a simple factorization algorithm for multivariate
polynomials with rational coefficients: We can reduce a multivariate
polynomial to a bivariate polynomial using the Hilbert
irreducibility theorem, the basic idea was described in
\cite{CAW2002}, and then convert a bivariate polynomial to a
univariate polynomial by substituting a transcendental number in
\cite{MA1985} or an algebraic number of high degree for one variate
in \cite{CFQZ2009}. After this substitution we can get an
approximate root of the univariate polynomial and use our algorithm
to find the irreducible polynomial satisfied by the exact root,
which must then be a factor of the given polynomial. It can find the
bivariate polynomial's factors, from which the factors of the
original multivariate polynomial can be recovered using Hensel
lifting. This is repeated until all the factors are found.

The other yields an efficient method of converting the rational
approximation representation to the minimal polynomial
representation of an algebraic number. The traditional
representation of algebraic numbers is by their minimal polynomials
\cite{BC1990,BCRD1986,EP1997,RL1982}. We now propose an efficient
method to the minimal polynomial representation, which only needs an
approximate value, degree and height of its minimal polynomial,
i.e., an ordered triple $<\tilde{\alpha},n,N>$ instead of an
algebraic number $\alpha$, where $\tilde{\alpha}$ is its approximate
value, and $n$ and $N$ are the degree and height of its minimal
polynomial, respectively, denoted by
$<\alpha>=<\tilde{\alpha},n,N>$. It is not hard to see that the
computations in the representation can be changed to computations in
the other without loss of efficiency, the rational approximation
method is closer to the intuitive notion of computation.

we also discuss some applications to some transcendental numbers by
using an improved parameterized integer relation construction
method. The form of these transcendental numbers is
sin$^{-1}(\alpha)$, cos$^{-1}(\alpha)$, log$(\alpha)$ etc., where
$\alpha$ is an algebraic number. Moreover, a large number of results
were found by using integer relation detection algorithm in the
course of research on multiple sums and quantum field theory in
\cite{DH2000}.

Suppose $\beta$ is the principle value of sin$^{-1}(\alpha)$ for
some unknown $\alpha$, which is, however, known to the algebraic of
degree and height at most $n$ and $N$, respectively. We consider
inferring the minimal polynomial of $\alpha$ from an approximation
$\tilde{\beta}$ to $\beta$ in the deterministic polynomial time. We
show that if $|\beta-\tilde{\beta}|$ is at most
$\varepsilon=1/(n^{2}(n+1)^{n-\frac{1}{2}}N^{2n})$, this can be
done. The specific technique is similar with the method in
\cite{KLL1988}.

Thus, in polynomial time we can compute from $\tilde{\beta}$ an
approximation $\tilde{\alpha}$ to an unknown algebraic number
$\alpha$ such that $|\alpha-\tilde{\alpha}|\leq \varepsilon$, with
$\varepsilon$ as above. Now Theorem (\ref{theo:the n-th algebraic
number }) guarantees that we can find the minimal polynomial of
$\alpha$ in polynomial time.

\section{Experimental results}
 Our algorithms have been implemented in
\emph{Maple}. The following examples run in the platform of Maple 12
under Windows and PIV2.53GB, 512MB of main memory. Table \ref{ta1}
proposes the $Digits$ of approximate values to compare our method
against the LLL-lattice basis reduction algorithm.
\begin{table}[h]
\begin{center}
\begin{tabular}{|c|c|c|c|c|c|}
 \hline $Ex.$ & $n$ & $N$ & $D_{LLL}$ & $D_{PSLQ}$ & $E_{PSLQ}$ \\
\hline 1 & 4 & 13 & 16     &  12     &  8   \\
\hline 2 & 7 & 17 & 36 &  25 &11  \\
\hline 3 & 10 & 15 &57  & 36  & 16 \\
\hline 4 & 15 & 19 &102 & 59 &  25\\
\hline 5 & 23 & 9 & 145 & 77  &  29 \\
\hline 6 & 27 & 19 & 240 &109  & 55\\
\hline 7 & 30 & 15 &277  & 118 & 57 \\
\hline 8 & 34 & 11 & 327 & 126 & 67 \\
\hline 9 & 40 & 15 & 435 & 161   & 82 \\
\hline 10 & 45 & 17 &532  &  189 & 110 \\
\hline 11 & 50 & 13 &620  &200 & 123 \\
\hline 12 & 100 & 13 &2033  &427 & 381 \\
\hline
\end{tabular}
\caption{\label{ta1}Comparison between our algorithm and LLL-lattice
basis reduction technique}\vskip3mm
\end{center}
\vspace{-0.15in}
\end{table}

In Table \ref{ta1}, we present many examples to compare our new
method against the LLL-lattice basis reduction algorithm. For each
example, we construct the irreducible polynomial with random
integral coefficients in the range $-20 \leq c \leq 20$. Here $n$
and $N$ denote the degree of algebraic number and the height of its
corresponding minimal polynomial respectively; whereas $D_{LLL}$ and
$D_{PSLQ}$ are relative $Digits$ to obtain the exact minimal
polynomial in theory, $E_{PSLQ}$ is in our optimal experimental
results respectively.

From Table \ref{ta1}, we observe that the \emph{Digits} of our
algorithm is far less than the LLL-lattice basis reduction technique
in theory. However, the \emph{Digits} of our algorithm is more than
that of the optimal experimental results. So, in the further work we
would like to consider improving the error controlling.

The following first two examples illuminate how to obtain an exact
quadratic algebraic number and minimal polynomial. Example 3 uses a
simple example to test our algorithm for factoring primitive
polynomials with integral coefficients.
\begin{example}
Let $\alpha$ be an unknown quadratic algebraic number. We only know
an upper bound of height on its minimal polynomial $N=47$. According
to theorem \ref{theo:the n-th algebraic number }, compute quadratic
algebraic number $\alpha$ as follows: First obtain control error
$\varepsilon=1/(12*\sqrt{3}*N^{4})=1/(1807729447692*\sqrt{3})\approx
1.0\times10^{-8}$. And then assume that we use some numerical method
to get an approximation $\tilde{\alpha}=11.937253933$, such that
$|\alpha-\tilde{\alpha}|<\varepsilon$. Calling algorithm
\ref{alg:n-th algebraic} yields as follows: Its minimal polynomial
is $g(x)=x^{2}-8*x-47$. So, we can obtain the corresponding
quadratic algebraic number $\alpha=4+3\sqrt{7}$.
\end{example}
\begin{remark}
  The function \emph{identify} in maple 12 needs \emph{Digits}:=13,
  whereas our algorithm only needs 9 digits.
\end{remark}
\begin{example}\label{exam:1}
Let a known floating number $\tilde{\alpha}$ belonging to some
algebraic number $\alpha$ of degree $n=4$, where
$\tilde{\alpha}=3.14626436994198$, we also know an upper bound of
height on its minimal polynomial $N=10$. According to theorem
\ref{theo:the n-th algebraic number }, we can get the error
$\varepsilon=1/(n^{2}(n+1)^{n-\frac{1}{2}}N^{2n})=1/(4^{2}*5^{\frac{7}{2}}*10^{8})\approx2.2\times10^{-12}$.
Calling algorithm \ref{alg:n-th algebraic}, if only the floating
number $\tilde{\alpha}$, such that
$|\alpha-\tilde{\alpha}|<\varepsilon$, then we can get its minimal
polynomial $g(x)=x^{4}-10*x^{2}+1$. So, the exact algebraic number
$\alpha$ is able to be denoted by
$<\alpha>=<3.14626436994198,4,10>$, i.e.,
$<\sqrt{2}+\sqrt{3}>=<3.14626436994198,4,10>$.
\end{example}
\begin{remark}
In the example $\ref{exam:1}$, we only propose a simple example to
represent the exact algebraic number by approximate method. In the
further work, we would like to discuss the efficient arithmetic
algorithms for summation, multiplication and inverse of the rational
approximation representation.
\end{remark}

\begin{example} This example is an application in factoring
primitive polynomials over integral coefficients. For convenience
and space-saving purposes, we choose a very simple polynomial as
follows:
$$p=3x^{9}-9x^{8}+3x^{7}+6x^{5}-27x^{4}+21x^{3}+30x^{2}-21x+3$$
We want to factor the polynomial $p$ via reconstruction of minimal
polynomials over the integers. First, we transform $p$ to a
primitive polynomial as follows:
$$p=x^{9}-3x^{8}+x^{7}+2x^{5}-9x^{4}+7x^{3}+10x^{2}-7x+1,$$
We see the upper bound of coefficients on polynomial $p$ is $10$,
which has relation with an upper bound of coefficients of the
factors on the primitive polynomial $p$ by Landau-Mignotte bound
\cite{M1974}. Taking $N=5$, $n=2$ yields
$\varepsilon=1/(2^{2}*(2+1)^{2-\frac{1}{2}}*5^{4})=1/(7500*\sqrt{3})\approx8.0\times10^{-5}$.
Then we compute the approximate root on $x$. With Maple we get via
[fsolve($p=0,x$)]:\\ $S=[2.618033989, 1.250523220, -.9223475138,
.3819660113, .2192284350].$

According to theorem \ref{theo:the n-th algebraic number }, let
$\tilde{\alpha}=2.618033989$ be an approximate value belonging to
some quadratic algebraic number $\alpha$, calling algorithm
\ref{alg:n-th algebraic} yields as follows: $$p_{1}=x^{2}-3*x+1.$$
And then we use the polynomial division to get
$$p_{2}=x^{7}+2*x^{3}-3*x^{2}-4*x+1.$$
Based on the Eisenstein's Criterion \cite{SL2002}, the $p_{2}$ is
irreducible in $\mathbf{Z}[X]$. So, the $p_{1}$ and $p_{2}$ are the
factors of primitive polynomial $p$.
\end{example}

\section{Conclusions}
In this paper, we propose a new method for obtaining exact results
by numerical approximate computations. The key technique of our
method is based on an improved parameterized integer relation
construction algorithm, which is able to find an exact relation by
the accuracy control. The error $\varepsilon$ in formula
(\ref{equ:n-th maxmin}) is an exponential function in degree and
height of its minimal polynomial. The result is superior to the
existence of error controlling on obtaining an exact rational number
from its approximation in \cite{ZF2007}. Using our algorithm, we
have succeed in factoring multivariate polynomials with rational
coefficients and providing an efficient method of converting the
rational approximation representation to the minimal polynomial
representation. Our method can be applied in many aspects, such as
proving inequality statements and equality statements, and computing
resultants, etc.. Thus we can take fully advantage of approximate
methods to solve larger scale symbolic computation problems.
Furthermore, our basic idea can be generalized easily to complex
algebraic numbers.

   \end{document}